\RequirePackage{amsmath}
\RequirePackage{etex}
\documentclass[runningheads]{llncs}

\usepackage[margin=1.7in]{geometry}

\usepackage[utf8]{inputenc}
\usepackage[T1]{fontenc}

\usepackage{fancyhdr}
\pagestyle{fancy}
\fancyhead{}
\fancyfoot{} 
\fancyfoot[LE,RO]{\thepage}           
\fancyfoot[RE,LO]{{\scriptsize This is a copy of the author preprint. The final authenticated version is available online at: \url{https://doi.org/10.1007/978-3-030-03638-6_7}}}

\usepackage{xcolor}

\usepackage{physics}
\usepackage{amsfonts}
\usepackage{amssymb}
\usepackage{algorithm}
\usepackage{algorithmic}
\usepackage{xfrac}

\newcommand{\bigO}[1]{\mathcal{O}(#1)}

\newcommand{\C}{\mathcal{C}}
\newcommand{\view}[1]{\mathrm{View}_{#1}}

\newcommand{\dist}{\mathsf{D}}
\newcommand{\negl}[1]{\mathrm{negl}(#1)}
\newcommand{\prob}[1]{\mathrm{Pr}[#1]}
\newcommand{\sample}{{\leftarrow}\vcenter{\hbox{\tiny\rmfamily\upshape\$}}}

\newcommand{\covpi}[1]{\Pi^{cov}_{#1}}

\title{Revisiting Deniability in Quantum Key Exchange}
\subtitle{via Covert Communication and Entanglement Distillation}

\titlerunning{Revisiting Deniability in Quantum Key Exchange}

\author{Arash Atashpendar\inst{1}$^{(\star)}$, G. Vamsi Policharla\inst{2}, Peter B. R\o nne\inst{1}, Peter Y.A. Ryan\inst{1}}

\authorrunning{Arash Atashpendar, G. Vamsi Policharla, Peter B. R\o nne, Peter Y. A. Ryan}

\institute{SnT, University of Luxembourg, Luxembourg \email{\{arash.atashpendar,peter.roenne,peter.ryan\}@uni.lu}
\and Department of Physics, Indian Institute of Technology Bombay, India \email{guruvamsi.policharla@iitb.ac.in}}

\begin{document}

	\maketitle
	\thispagestyle{fancy} 

	\begin{abstract}
		We revisit the notion of deniability in quantum key exchange (QKE), a topic that remains largely unexplored. In the only work on this subject by Donald Beaver, it is argued that QKE is not necessarily deniable due to an eavesdropping attack that limits key equivocation.
		We provide more insight into the nature of this attack and how it extends to other constructions such as QKE obtained from uncloneable encryption.
		We then adopt the framework for quantum authenticated key exchange, developed by Mosca et al., and extend it to introduce the notion of coercer-deniable QKE, formalized in terms of the indistinguishability of real and fake coercer views.
		Next, we apply results from a recent work by Arrazola and Scarani on covert quantum communication to establish a connection between covert QKE and deniability. We propose DC-QKE, a simple deniable covert QKE protocol, and prove its deniability via a reduction to the security of covert QKE. Finally, we consider how entanglement distillation can be
		used to enable information-theoretically deniable protocols for QKE and tasks beyond key exchange.
	\end{abstract}

	\section{Introduction}

    Deniability represents a fundamental privacy-related notion in cryptography. The ability to deny a message or an action is a desired property in many contexts such as off-the-record communication, anonymous reporting, whistle-blowing and coercion-resistant secure electronic voting.
    The concept of non-repudiation is closely related to deniability in that the former is aimed at associating specific actions with legitimate parties and thereby preventing them from denying that they have performed a certain task, whereas the latter achieves the opposite property by allowing legitimate parties to deny having performed a particular action. For this reason, deniability is sometimes referred to as \emph{repudiability}.

	The definitions and requirements for deniable exchange can vary depending on the cryptographic task in question, e.g., encryption, authentication or key exchange. Roughly speaking, the common underlying idea for a deniable scheme can be understood as the impossibility for an adversary to produce cryptographic proofs, using only algorithmic evidence, that would allow a third-party, often referred to as a judge, to decide if a particular entity has either taken part in a given exchange or exchanged a certain message, which can be a secret key, a digital signature, or a plaintext message. In the context of key exchange, this can be also formulated in terms of a corrupt party (receiver) proving to a judge that a message can be traced back to the other party \cite{di2006deniable}.

	In the public-key setting, an immediate challenge for achieving deniability is posed by the need for
	remote authentication as it typically gives rise to binding evidence,  e.g., digital signatures, see \cite{di2006deniable,dodis2009composability}. The formal analysis of deniability in classical cryptography can be traced back to the original works of Canetti et al. and Dwork et al. on deniable encryption \cite{canetti1997deniable} and deniable authentication \cite{dwork2004concurrent}, respectively. These led to a series of papers on this topic covering a relatively wide array of applications. Deniable key exchange was first formalized by Di Raimondo et al. in \cite{di2006deniable} using a framework based on the simulation paradigm, which is closely related to that of zero-knowledge proofs.

	\fancyhead[RE]{Revisiting Deniability in Quantum Key Exchange}
	\fancyhead[LO]{A. Atashpendar et al.}

	Despite being a well-known and fundamental concept in classical cryptography, rather surprisingly, deniability has been largely ignored by the quantum cryptography community. To put things into perspective, with the exception of a single paper by Donald Beaver \cite{beaver2002deniability}, and a footnote in \cite{ioannou2011new} commenting on the former, there are no other works that directly tackle deniable QKE.

	In the adversarial setting described in \cite{beaver2002deniability}, it is assumed that the honest parties are approached by the adversary after the termination of a QKE session and demanded to reveal their private randomness, i.e., the raw key bits encoded in their quantum states. It is then claimed that QKE schemes, despite having perfect and unconditional security, are not necessarily deniable due to an eavesdropping attack. In the case of the BB84 protocol, this attack introduces a binding between the parties' inputs and the final key, thus constraining the space of the final secret key such that key equivocation is no longer possible.

	Note that since Beaver's work \cite{beaver2002deniability} appeared a few years before a formal analysis of deniability for key exchange was published, its analysis is partly based on the adversarial model formulated earlier in \cite{canetti1997deniable} for deniable encryption. For this reason, the setting corresponds more closely to scenarios wherein the honest parties try to deceive a coercer by
	presenting fake messages and randomness, e.g., deceiving a coercer who tries to verify a voter's claimed choice using an intercepted ciphertext of a ballot in the context of
	secure e-voting.

	\subsection{Contributions and Structure}

	In Section \ref{sec:coercer-deniable-qke} we revisit the notion of deniability in QKE and provide more insight into the eavesdropping attack aimed at detecting attempts at denial described in \cite{beaver2002deniability}. Having shed light on the nature of this attack, we show that while coercer-deniability can be achieved by uncloneable encryption (UE) \cite{gottesman2002uncloneable}, QKE obtained from UE remains vulnerable to the same attack. We briefly elaborate on the differences between our model and simulation-based deniability \cite{di2006deniable}. To provide a firm foundation, we adopt the framework and security model for quantum authenticated key exchange (Q-AKE) developed by Mosca et al. \cite{mosca2013quantum} and extend them to introduce the notion of coercer-deniable QKE, which we formalize in terms of the indistinguishability of real and fake coercer views.

	We establish a connection between the concept of covert communication and deniability in Section \ref{sec:dc-qke}, which to the best of our knowledge has not been formally considered before. More precisely, we apply results from a recent work by Arrazola and Scarani on obtaining covert quantum communication and covert QKE via noise injection \cite{AS16} to propose DC-QKE, a simple construction for coercer-deniable QKE. We prove the deniability of DC-QKE via a reduction to the security of covert QKE. Compared to the candidate PQECC protocol suggested in \cite{beaver2002deniability} that is claimed to be deniable, our construction does not require quantum computation and falls within the more practical realm of prepare-and-measure protocols.

	Finally, in Section \ref{sec:entanglement-distillation} we consider how quantum entanglement distillation can be used not only to counter eavesdropping attacks, but also to achieve information-theoretic deniability. We conclude by presenting some open questions in Section \ref{sec:open-questions}. It is our hope that this work will rekindle interest, more broadly, in the notion of deniable communication in the quantum setting, a topic that has received very little attention from the quantum cryptography community.

	\subsection{Related Work}

	We focus on some of the most prominent works in the extensive body of work on deniability in classical cryptography. The notion of deniable encryption was considered by Canetti et al. \cite{canetti1997deniable} in a setting where an adversary demands that parties reveal private coins used for generating a ciphertext. This motivated the need for schemes equipped with a faking algorithm that can produce fake randomness with distributions indistinguishable from that of the real encryption.

	In a framework based on the simulation paradigm, Dwork et al. introduced the notion of deniable authentication \cite{dwork2004concurrent}, followed by the work of Di Raimondo et al. on the formalization of deniable key exchange \cite{di2006deniable}. Both works rely on the formalism of zero-knowledge (ZK) proofs, with definitions formalized in terms of a simulator that can produce a simulated view that is indistinguishable from the real one. In a subsequent work, Di Raimondo and Gennaro gave a formal definition of forward deniability \cite{di2009new}, requiring that indistinguishability remain intact even when a (corrupted) party reveals real coins after a session. Among other things, they showed that statistical ZK protocols are forward deniable.

	Pass \cite{pass2003deniability} formally defines the notion of deniable zero-knowledge and presents positive and negative results in the common reference string and random oracle model. In \cite{dodis2009composability}, Dodis et al. establish a link between deniability and ideal authentication and further model a situation in which deniability should hold even when a corrupted party colludes with the adversary during the execution of a protocol. They show an impossibility result in the PKI model if adaptive corruptions are allowed. Cremers and Feltz introduced another variant for key exchange referred to as peer and time deniability \cite{cremers2011one}, while also capturing perfect forward secrecy. More recently, Unger and Goldberg studied deniable authenticated key exchange (DAKE) in the context of secure messaging \cite{unger2015deniable}.

	To the best of our knowledge, the only work related to deniability in QKE is a single paper by Beaver \cite{beaver2002deniability}, in which the author suggests a negative result arguing that existing QKE schemes are not necessarily deniable.

	\section{Preliminaries in Quantum Information and QKE}\label{sec:preliminaries}

	We use the Dirac bra-ket notation and standard terminology from quantum computing. Here we limit ourselves to a description of the most relevant concepts in quantum information theory. More details can be found in standard textbooks \cite{nielsen2002quantum,wilde2013quantum}. For brevity, let $A$ and $B$ denote the honest parties, and $E$ the adversary.

    Given an orthonormal basis formed by $\ket{0}$ and $\ket{1}$ in a two-dimensional complex Hilbert space $\mathcal{H}_2$, let $(+) \equiv \{ \ket{0}, \ket{1} \}$ denote the computational basis and $(\times) \equiv \{ (\sfrac{1}{\sqrt{2}})(\ket{0} + \ket{1}), (\sfrac{1}{\sqrt{2}})(\ket{0} - \ket{1}) \}$ the diagonal basis.

	If the state vector of a composite system cannot be expressed as a tensor product $\ket{\psi_1} \otimes \ket{\psi_2}$, the state of each subsystem cannot be described independently and we say the two qubits are \emph{entangled}. This property is best exemplified by maximally entangled qubits (\emph{ebits}), the so-called \emph{Bell states}
    \begin{align*}
	\ket{\Phi^\pm}_{AB} = \frac{1}{\sqrt{2}}(\ket{00}_{AB} \pm \ket{11}_{AB}) \quad , \quad \ket{\Psi^\pm}_{AB} = \frac{1}{\sqrt{2}}(\ket{01}_{AB} \pm \ket{10}_{AB})
	\end{align*}

	A noisy qubit that cannot be expressed as a linear superposition of pure states is said to be in a \emph{mixed} state, a classical probability distribution of pure states: $\{p_X(x), \ket{\psi_x}\}_{x \in X}$. The \emph{density operator} $\rho$, defined as a weighted sum of projectors, captures both pure and mixed states: $\rho \equiv \sum_{x \in \mathcal{X}}p_X(x) \ket{\psi_x}\bra{\psi_x}$.

	Given a density matrix $\rho_{AB}$ describing the joint state of a system held by $A$ and $B$, the \emph{partial trace} allows us to compute the local state of $A$ (density operator $\rho_A$) if $B$'s system is not accessible to $A$.
	To obtain $\rho_A$ from $\rho_{AB}$ (the reduced state of $\rho_{AB}$ on $A$), we trace out the system $B$: $\rho_A = \mathrm{Tr}_{B}(\rho_{AB})$. As a distance measure, we use the expected fidelity $F(\ket{\psi}, \rho)$ between a pure state $\ket{\psi}$ and a mixed state $\rho$ given by $F(\ket{\psi}, \rho) = \bra{\psi}\rho\ket{\psi}$.

	A crucial distinction between quantum and classical information is captured by the well-known No-Cloning theorem \cite{wootters1982single}, which states that an arbitrary unknown quantum state cannot be copied or cloned perfectly.

	\subsection{Quantum Key Exchange and Uncloneable Encryption}\label{subsec:qke-and-ue}
	QKE allows two parties to establish a common secret key with information-theoretic security using an insecure quantum channel, and a public authenticated classical channel.
	In Protocol \ref{protocol:bb84} we describe the \textbf{BB84} protocol, the most well-known QKE variant due to Bennett and Brassard \cite{bennett1984quantum}. For consistency with related works, we use the well-established formalism based on error-correcting codes, developed by Shor and Preskill \cite{shor2000simple}. Let $C_1[n,k_1]$ and $C_2[n,k_2]$ be two classical linear binary codes encoding $k_1$ and $k_2$ bits in $n$ bits such that $\{0\} \subset C_2 \subset C_1 \subset \mathbf{F}^n_2$ where $\mathbf{F}^n_2$ is the binary vector space on $n$ bits. A mapping of vectors $v \in C_1$ to a set of basis states (codewords) for the Calderbank-Shor-Steane (CSS) \cite{calderbank1996good,steane1996multiple} code subspace is given by: $v \mapsto (\sfrac{1}{\sqrt{|C_2|}})\sum_{w \in C_2}\ket{v+w}$. Due to the irrelevance of phase errors and their decoupling from bit flips in CSS codes, Alice can send $\ket{v}$ along with classical error-correction information $u+v$ where $u,v \in \mathbf{F}^n_2$ and $u \in C_1$, such that Bob can decode to a codeword in $C_1$ from $(v+\epsilon)-(u+v)$ where $\epsilon$ is an error codeword, with the final key being the coset leader of $u + C_2$.

    \begin{algorithm}
	\floatname{algorithm}{Protocol}
	\caption{BB84 for an $n$-bit key with protection against $\delta n$ bit errors}
	\label{protocol:bb84}
	\begin{algorithmic}[1]
		\STATE Alice generates two random bit strings $a,b \in \{0,1\}^{(4+\delta)n}$, encodes $a_i$ into $\ket{\psi_i}$ in basis $(+)$ if $b_i=0$ and in $(\times)$ otherwise, and $\forall i \in [1,|a|]$ sends $\ket{\psi_i}$ to Bob.
		\STATE Bob generates a random bit string $b' \in \{0,1\}^{(4+\delta)n}$ and upon receiving the qubits, measures $\ket{\psi_i}$ in $(+)$ or $(\times)$ according to $b'_i$ to obtain $a'_i$.

		\STATE Alice announces $b$ and Bob discards $a'_i$ where $b_i \neq b'_i$, ending up with at least $2n$ bits with high probability.

		\STATE Alice picks a set $p$ of $2n$ bits at random from $a$, and a set $q$ containing $n$ elements of $p$ chosen as check bits at random. Let $v = p \setminus q$.

		\STATE Alice and Bob compare their check bits and abort if the error exceeds a predefined threshold.

		\STATE Alice announces $u+v$, where $v$ is the string of the remaining non-check bits, and $u$ is a random codeword in $C_1$.

		\STATE Bob subtracts $u+v$ from his code qubits, $v+\epsilon$, and corrects the result, $u+\epsilon$, to a codeword in $C_1$.

		\STATE Alice and Bob use the coset of $u+C_2$ as their final secret key of length $n$.
	\end{algorithmic}
    \end{algorithm}

	\textbf{Uncloneable encryption} (UE) enables transmission of ciphertexts that cannot be perfectly copied and stored for later decoding, by encoding carefully prepared codewords into quantum states, thereby leveraging the No-Cloning theorem. We refer to Gottesman's original work \cite{gottesman2002uncloneable} for a detailed explanation of the sketch in Protocol \ref{protocol:ue}. Alice and Bob agree on a message length $n$, a Message Authentication Code (MAC) of length $s$, an error-correcting code $C_1$ having message length $K$ and codeword length $N$ with distance $2\delta N$ for average error rate $\delta$, and another error-correcting code $C_2$ (for privacy amplification) with message length $K'$ and codeword length $N$ and distance $2(\delta+\eta)N$ to correct more errors than $C_1$, satisfying $C_2^\bot \subset C_1$, where $C_2^\bot$ is the dual code containing all vectors orthogonal to $C_2$. The pre-shared key is broken down into four pieces, all chosen uniformly at random: an authentication key $k \in \{ 0,1\}^s$, a one-time pad $e \in \{0,1\}^{n+s}$, a syndrome $c_1 \in \{0,1\}^{N-K}$, and a basis sequence $b \in \{ 0,1\}^N$.

	\begin{algorithm}
	\floatname{algorithm}{Protocol}
	\caption{Uncloneable Encryption for sending a message $m\in \{0,1\}^n$}
	\label{protocol:ue}
	\begin{algorithmic}[1]
		\STATE Compute $\mathrm{MAC}(m)_k = \mu \in \{0,1\}^s$. Let $x = m || \mu \in \{0,1 \}^{n+s}$.
		\STATE Mask $x$ with the one-time pad $e$ to obtain $y = x \oplus e$.
		\STATE From the coset of $C_1$ given by the syndrome $c_1$, pick a random codeword $z \in \{0,1 \}^N$ that has syndrome bits $y$ w.r.t. $C_2^{\bot}$, where $C_2^\bot \subset C_1$.
		\STATE For $i \in [1, N]$ encode ciphertext bit $z_i$ in the basis $(+)$ if $b_i = 0$ and in the basis $(\times)$ if $b_i = 1$. The resulting state $\ket{\psi_i}$ is sent to Bob.
	\end{algorithmic}
	To perform decryption:
	\begin{algorithmic}[1]
		\STATE For $i \in [1, N]$, measure $\ket{\psi'_i}$ according to $b_i$, to obtain $z'_i \in \{0,1\}^N$.
		\STATE Perform error-correction on $z'$ using code $C_1$ and evaluate the parity checks of $C_2/C_1^{\bot}$ for privacy amplification to get an $(n+s)$-bit string $y'$.
		\STATE Invert the OTP step to obtain $x' = y' \oplus e$.
		\STATE Parse $x'$ as the concatenation $m' || \mu'$ and use $k$ to verify if $\mathrm{MAC}(m')_k = \mu'$.
	\end{algorithmic}
    \end{algorithm}

	\paragraph{QKE from UE.} It is known \cite{gottesman2002uncloneable} that any quantum authentication (QA) scheme can be used as a secure UE scheme, which can in turn be used to obtain QKE, with less interaction and more efficient error detection. We give a brief description of how QKE can be obtained from UE in Protocol \ref{protocol:qke-from-ue}.

	\begin{algorithm}
	\floatname{algorithm}{Protocol}
	\caption{Obtaining QKE from Uncloneable Encryption}
	\label{protocol:qke-from-ue}
	\begin{algorithmic}[1]
	    \STATE Alice generates random strings $k$ and $x$, and sends $x$ to Bob via UE, keyed with $k$.
	    \STATE Bob announces that he has received the message, and then Alice announces $k$.
	    \STATE Bob decodes the classical message $x$, and upon MAC verification, if the message is valid, he announces this to Alice and they will use $x$ as their secret key.
	\end{algorithmic}
    \end{algorithm}

    \section{Coercer-Deniable Quantum Key Exchange}\label{sec:coercer-deniable-qke}

	Following the setting in \cite{beaver2002deniability}, in which it is implicitly assumed that the adversary has established a binding between the participants' identities and a given QKE session, we introduce the notion of coercer-deniability for QKE. This makes it possible to consider an adversarial setting similar to that of deniable encryption \cite{canetti1997deniable} and expect that the parties might be coerced into revealing their private coins after the termination of a session, in which case they would have to produce fake randomness such that the resulting transcript and the claimed values remain consistent with the adversary's observations.

	Beaver's analysis \cite{beaver2002deniability} is briefly addressed in a footnote in a paper by Ioannou and Mosca \cite{ioannou2011new} and the issue is brushed aside based on the argument that the parties do not have to keep records of their raw key bits. It is argued that for deniability to be satisfied, it is sufficient that the adversary cannot provide binding evidence that attributes a particular key to the classical communication as their measurements on the quantum channel do not constitute a publicly verifiable proof. However, counter-arguments for this view were already raised in the motivations for deniable encryption \cite{canetti1997deniable} in terms of secure erasure being difficult and unreliable, and that erasing cannot be externally verified. Moreover, it is also argued that if one were to make the physical security assumption that random choices made for encryption are physically unavailable, the deniability problem would disappear. We refer to \cite{canetti1997deniable} and references therein for more details.

	Bindings, or lack thereof, lie at the core of deniability. Although we leave a formal comparison of our model with the one formulated in the simulation paradigm \cite{di2006deniable} as future work, a notable difference can be expressed in terms of the inputs presented to the adversary. In the simulation paradigm, deniability is modelled only according to the simulatability of the legal transcript that the adversary or a corrupt party produces naturally via a session with a party as evidence for the judge, whereas for coercer-deniability, the adversary additionally demands that the honest parties reveal their private randomness.

	Finally, note that viewing deniability in terms of ``convincing'' the adversary is bound to be problematic and indeed a source of debate in the cryptographic research community as the adversary may never be convinced given their knowledge of the existence of faking algorithms.
	Hence, deniability is formulated in terms of the indistinguishability of views (or their simulatability \cite{di2006deniable}) such that a judge would have no reason to believe a given transcript provided by the adversary establishes a binding as it could have been forged or simulated.

	\subsection{Defeating Deniability in QKE via Eavesdropping in a Nutshell}\label{subsec:state-injection-attack}

	We briefly review the eavesdropping attack described in \cite{beaver2002deniability} and provide further insight. Suppose Alice sends qubit $\ket{\psi}^{m,b}$ to Bob, which encodes a single-bit message $m$ prepared in a basis determined by $b \in \{+, \times\}$. Let $\Phi(E, m)$ denote the state obtained after sending $\ket{\psi}^{m,b}$, relayed and possibly modified by an adversary $E$. Moreover, let $\rho(E, m)$ denote the view presented to the judge, obtained by tracing over inaccessible systems. Now for a qubit measured correctly by Eve, if a party tries to deny by pretending to have sent $\sigma_1 = \rho(E, 1)$ instead of $\sigma_2  = \rho(E, 0)$, e.g., by using some local transformation $U_{neg}$ to simply negate a given qubit, then $F(\sigma_1, \sigma_2) = 0$, where $F$ denotes the fidelity between $\sigma_1$ and $\sigma_2$. Thus, the judge can successfully detect this attempt at denial.

	This attack can be mounted successfully with non-negligible probability without causing the session to abort: Assume that $N$ qubits will be transmitted in a BB84 session and that the tolerable error rate is $\frac{\eta}{N}$, where clearly $\eta \sim N$. Eve measures each qubit with probability $\frac{\eta}{N}$ (choosing a basis at random) and passes on the remaining ones to Bob undisturbed, i.e., she plants a number of decoy states proportional to the tolerated error threshold. On average, $\frac{\eta}{2}$ measurements will come from matching bases, which can be used by Eve to detect attempts at denial, if Alice claims to have measured a different encoding. After discarding half the qubits in the sifting phase, this ratio will remain unchanged. Now Alice and/or Bob must flip at least one bit in order to deny without knowledge of where the decoy states lie in the transmitted sequence, thus getting caught with probability $\frac{\eta}{2N}$ upon flipping a bit at random.

	\subsection{On the Coercer-Deniability of Uncloneable Encryption}

	The vulnerability described in Section \ref{subsec:state-injection-attack} is made possible by an eavesdropping attack that induces a binding in the key coming from a BB84 session. Uncloneable encryption remains immune to this attack because the quantum encoding is done for an already one-time padded classical input. More precisely, a binding established at the level of quantum states can still be perfectly denied because the actual raw information bits $m$ are not directly encoded into the sequence of qubits, instead the concatenation of $m$ and the corresponding authentication tag $\mu = \mathrm{MAC}_k(m)$, i.e., $x=m||\mu$, is masked with a one-time pad $e$ to obtain $y = x \oplus e$, which is then mapped onto a codeword $z$ that is encoded into quantum states. For this reason, in the context of coercer-deniability, regardless of a binding established on $z$ by the adversary, Alice can still deny to another input message in that she can pick a different input $x'=m'||\mu'$ to compute a fake pad $e' = y \oplus x'$, so that upon revealing $e'$ to Eve, she will simply decode $y \oplus e' = x'$, as intended.

	However, note that a prepare-and-measure QKE obtained from UE still remains vulnerable to the same eavesdropping attack due to the fact that we can no longer make use of the deniability of the one-time pad in UE such that the bindings induced by Eve constrain the choice of the underlying codewords.

	\subsection{Security Model}\label{subsec:security-model}

	We adopt the framework for quantum AKEs developed by Mosca et al. \cite{mosca2013quantum}. Due to space constraints, we mainly focus on our proposed extensions. \textbf{Parties}, including the adversary, are modelled as a pair of classical and quantum Turing machines (TM) that execute a series of interactive computations and exchange messages with each other through classical and quantum channels, collectively referred to as a \textbf{protocol}. An execution of a protocol is referred to as a \textbf{session}, identified with a unique session identifier.
	An ongoing session is called an \emph{active} session, and upon completion, it either outputs an error term $\bot$ in case of an abort, or it outputs a tuple $(sk, pid, \mathbf{v}, \mathbf{u})$ in case of a successful termination. The tuple consists of a session key $sk$, a party identifier $pid$ and two vectors $\mathbf{u}$ and $\mathbf{v}$ that model public values and secret terms, respectively.

    We adopt an extended version of the \textbf{adversarial model} described in \cite{mosca2013quantum}, to account for coercer-deniability. Let $E$ be an efficient, i.e. (quantum) polynomial time, adversary with classical and quantum runtime bounds $t_c(k)$ and $t_q(k)$, and quantum memory bound $m_q(k)$, where bounds can be unlimited. Following standard assumptions, the adversary controls all communication between parties and carries the messages exchanged between them. We consider an authenticated classical channel and do not impose any special restrictions otherwise. Additionally, the adversary is allowed to approach either the sender or the receiver after the termination of a session and request access to a subset $
    \vec{r} \subseteq \vec{v}$ of the private randomness used by the parties for a given session, i.e. set of values to be faked.

	Security notions can be formulated in terms of \textbf{security experiments} in which the adversary interacts with the parties via a set of well-defined \textbf{queries}. These queries typically involve sending messages to an active session or initiating one, corrupting a party, learning their long-term secret key, revealing the ephemeral keys of an incomplete session, obtaining the computed session key for a given session, and a \textbf{test-session($id$)} query capturing the winning condition of the game that can be invoked only for a \emph{fresh} session. Revealing secret values to the adversary is modeled via \textbf{partnering}. The notion of \emph{freshness} captures the idea of excluding cases that would allow the adversary to trivially win the security experiment. This is done by imposing minimal restrictions on the set of queries the adversary can invoke for a given session such that there exist protocols that can still satisfy the definition of session-key security.
	A session remains fresh as long as at least one element in $\vec{u}$ and $\vec{v}$ remains secret, see \cite{mosca2013quantum} for more details.

	The \textbf{transcript} of a protocol consists of all publicly exchanged messages between the parties during a run or session of the protocol.
	The definition of ``views'' and ``outputs'' given in \cite{beaver2002deniability} coincides with that of transcripts in \cite{di2006deniable} in the sense that it allows us to model a transcript that can be obtained from observations made on the quantum channel. The \emph{view} of a party $P$ consists of their state in $\mathcal{H}_P$ along with any classical strings they produce or observe. More generally, for a two-party protocol, captured by the global density matrix $\rho_{AB}$ for the systems of $A$ and $B$, the individual system $A$ corresponds to a partial trace that yields a reduced density matrix, i.e., $\rho_A = \mathrm{Tr}_B(\rho_{AB})$, with a similar approach for any additional couplings.

	\subsection{Coercer-Deniable QKE via View Indistinguishability}

	We use the security model in Section \ref{subsec:security-model} to introduce the notion of coercer-deniable QKE, formalized via the indistinguishability of real and fake views. Note that in this work we do not account for forward deniability and forward secrecy.

	\paragraph{Coercer-Deniability Security Experiment.}\label{sec-exp:coercer-deniable-qke}
	Let $\mathrm{CoercerDenQKE}^{\Pi}_{E, \C}(\kappa)$ denote this experiment and $Q$ the same set of queries available to the adversary in a security game for session-key security, as described in Section \ref{subsec:security-model}, and \cite{mosca2013quantum}. Clearly, in addition to deniability, it is vital that the security of the session key remains intact as well. For this reason, we simply extend the requirements of the security game for a session-key secure KE by having the challenger $\C$ provide an additional piece of information to the adversary $E$ when the latter calls the \textbf{test-session()} query. This means that the definition of a fresh session remains the same as the one given in \cite{mosca2013quantum}. $E$ invokes queries from $Q \setminus \{\text{{\bfseries test-session()}}\}$ until $E$ issues \textbf{test-session()} to a fresh session of their choice. $\C$ decides on a random bit $b$ and if $b=0$, $\C$ provides $E$ with the real session key $k$ and the real vector of private randomness $\vec{r}$, and if $b=1$, with a random (fake) key $k'$ and a random (fake) vector of private randomness $\vec{r}'$.
	Finally, $E$ guesses an output $b'$ and wins the game if $b = b'$. The experiment returns 1 if $E$ succeeds, and 0 otherwise. Let $Adv_{E}^{\Pi}(\kappa) = |\prob{b = b'} - \sfrac{1}{2}|$ denote the winning advantage of $E$.

	\begin{definition}[Coercer-Deniable QKE]\label{def:coercer-deniable-qke}
	For adversary $E$, let there be an efficient distinguisher $\dist_E$ on security parameter $\kappa$. We say that $\Pi_{\vec{r}}$ is a coercer-deniable QKE protocol if, for any adversary $E$, transcript $\vec{t}$, and for any $k, k'$, and a vector of private random inputs $\vec{r} = (r_1, \ldots, r_{\ell})$, there exists a denial/faking program $\mathcal{F}_{A,B}$ that running on $(k, k', \vec{t}, \vec{r})$ produces $\vec{r}' = (r'_1, \ldots, r'_{\ell})$ such that the following conditions hold:
	\begin{itemize}
	    \item $\Pi$ is a secure QKE protocol.
	    \item The adversary $E$ cannot do better than making a random guess for winning the coercer-deniability security experiment, i.e., $Adv_{E}^{\Pi}(\kappa) \le \mathrm{negl}(\kappa)$
	    \[
	    \mathrm{Pr}[\mathrm{CoercerDenQKE}^{\Pi}_{E,\C}(\kappa) = 1] \le \frac{1}{2} + \mathrm{negl}(\kappa)
	    \]
	\end{itemize}
	Equivalently, we require that for all efficient distinguisher $\dist_E$
	\[
	|\prob{\dist_E(\view{Real}(k, \vec{t}, \vec{r})) = 1} - \prob{\dist_E(\view{Fake}(k', \vec{t}, \vec{r'})) = 1}| \le \mathrm{negl}(\kappa),
	\]
	where the transcript $\vec{t}=(\vec{c}, \rho_E(k))$ is a tuple consisting of a vector $\vec{c}$, containing classical message exchanges of a session, along with the local view of the adversary w.r.t. the quantum channel obtained by tracing over inaccessible systems (see Section \ref{subsec:security-model}).
	\end{definition}

	A function $f: \mathbb{N} \rightarrow \mathbb{R}$ is negligible if for any constant $k$, there exists a $N_k$ such that $\forall  N \ge N_k$, we have $f(N) < N^{-k}$. In other words, it approaches zero faster than any polynomial in the asymptotic limit.

	\begin{remark}\label{remark:randomness-compromise}
	We introduced a vector of private random inputs $\vec{r}$ to avoid being restricted to a specific set of ``fake coins'' in a coercer-deniable setting such as the raw key bits in BB84 as used in Beaver's analysis. This allows us to include other private inputs as part of the transcript that need to be forged by the denying parties without having to provide a new security model for each variant. Indeed, in \cite{mosca2013quantum}, Mosca et al. consider the security of QKE in case various secret values are compromised before or after a session. This means that these values can, in principle, be included in the set of random coins that might have to be revealed to the adversary and it should therefore be possible to generate fake alternatives using a faking algorithm.
	\end{remark}

	\section{Deniable QKE via Covert Quantum Communication}\label{sec:dc-qke}

	We establish a connection between covert communication and deniability by providing a simple construction for coercer-deniable QKE using covert QKE. We then show that deniability is reduced to the covertness property, meaning that deniable QKE can be performed as long as covert QKE is not broken by the adversary, formalized via the security reduction given in Theorem \ref{thm:den-covert-reduction}.

	Covert communication becomes relevant when parties wish to keep the very act of communicating secret or hidden from a malicious warden. This can be motivated by various requirements such as the need for hiding one's communication with a particular entity when this act alone can be incriminating. While encryption can make it impossible for the adversary to access the contents of a message, it would not prevent them from detecting exchanges over a channel under their observation. Bash et al. \cite{bash2015hiding,sheikholeslami2016covert} established a square-root law for covert communication in the presence of an unbounded quantum adversary stating that $\bigO{\sqrt{n}}$ covert bits can be exchanged over $n$ channel uses. Recently, Arrazola and Scarani \cite{AS16} extended covert communication to the quantum regime for transmitting qubits covertly. Covert quantum communication consists of two parties exchanging a sequence of qubits such that an adversary trying to detect this cannot succeed by doing better than making a random guess, i.e., $P_d \le \frac{1}{2} + \epsilon$ for sufficiently small $\epsilon > 0$, where $P_d$ denotes the probability of detection and $\epsilon$ the detection bias.

	\subsection{Covert Quantum Key Exchange}

	Since covert communication requires pre-shared secret randomness, a natural question to ask is whether QKE can be done covertly. This was also addressed in \cite{AS16} and it was shown that covert QKE with unconditional security for the covertness property is impossible because the amount of key consumption is greater than the amount produced. However, a hybrid approach involving pseudo-random number generators (PRNG) was proposed to achieve covert QKE with a positive key rate such that the resulting secret key remains information-theoretically secure, while the covertness of QKE is shown to be at least as strong as the security of the PRNG. The PRNG is used to expand a truly random pre-shared key into an exponentially larger pseudo-random output, which is then used to determine the time-bins for sending signals in covert QKE.

	\paragraph{Covert QKE Security Experiment.}\label{sec-exp:covert-qke} Let $\mathrm{CovertQKE}^{\Pi^{cov}}_{E,\C}(\kappa)$ denote the security experiment. The main property of covert QKE, denoted by $\covpi{}$, can be expressed as a game played by the adversary $E$ against a challenger $\C$ who decides on a random bit $b$ and if $b=0$, $\C$ runs $\covpi{}$, otherwise (if $b=1$), $\C$ does not run $\covpi{}$. Finally, $E$ guesses a random bit $b'$ and wins the game if $b=b'$. The experiment outputs 1 if $E$ succeeds, and 0 otherwise.
	The winning advantage of $E$ is given by $Adv_{E}^{\Pi^{cov}}(\kappa) = |\prob{b = b'} - \sfrac{1}{2}|$ and we want that $Adv^{\Pi^{cov}}_{E}(\kappa) \le \negl{\kappa}$.

	\begin{definition}\label{def:covert-QKE}
	Let $G: \{0,1\}^s \rightarrow \{0,1\}^{g(s)}$ be a $(\tau,\epsilon)$-PRNG secure against all efficient distinguishers $\dist$ running in time at most $\tau$ with success probability at most $\epsilon$, where $\forall s: g(s) > s$. A QKE protocol $\covpi{G}$ is considered to be covert if the following holds for any efficient adversary $E$:
		\begin{itemize}
		    \item $\covpi{G}$ is a secure QKE protocol.
			\item The probability that $E$ guesses the bit $b$ correctly ($b' = b$), i.e., $E$ manages to distinguish between Alice and Bob running $\covpi{G}$ or not, is no more than $\frac{1}{2}$ plus a negligible function in the security parameter $\kappa$, i.e.,
			\[
			\prob{\mathrm{CovertQKE}^{\Pi^{cov}}_{E, \C}(\kappa) = 1} \le \frac{1}{2} + \negl{\kappa}
			\]
		\end{itemize}
	\end{definition}
	\begin{theorem}(Sourced from \cite{AS16})\label{thm:covert-QKE}
	The secret key obtained from the covert QKE protocol $\covpi{G}$ is informational-theoretically secure and the covertness of $\covpi{G}$ is as secure as the underlying PRNG.
    \end{theorem}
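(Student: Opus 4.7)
The plan is to decouple the two claims: information-theoretic security of the extracted key reduces to the security of the underlying QKE subroutine, while the covertness property is obtained from the PRNG by a standard hybrid argument invoking the unconditional covert-communication square-root law.

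First I would handle the key-security half. The hybrid construction uses $G(s)$ only to determine the sparse time-bins in which BB84-like signals are actually transmitted; the quantum encodings, sifting, error estimation and privacy amplification inside those bins are exactly those of an underlying information-theoretically secure QKE protocol. I would therefore consider a modified game in which the adversary $E$ is handed the full PRNG output (equivalently, the entire transmission schedule) as side information. Since this side information is independent of the random basis and bit choices used inside the active time-bins, reducing to the standard QKE composable security statement shows that the final key is $\varepsilon$-close to uniform and independent of $E$'s view, with $\varepsilon$ negligible in the QKE security parameter. This yields the first bullet of the definition.

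For the covertness half, I would argue by a two-step hybrid. Define $\covpi{U}$ as the idealised protocol in which the PRNG output $G(s)$ is replaced by a truly uniform string of length $g(s)$. By the covert-quantum-communication result of Arrazola and Scarani (building on the Bash et al. square-root law), any unbounded adversary distinguishing ``$\covpi{U}$ is running'' from ``nothing is running'' has advantage at most $\epsilon_{cov}$, which can be driven below any inverse polynomial by choosing the signalling density appropriately. Now suppose an efficient $E$ wins the covertness game against $\covpi{G}$ with non-negligible advantage $\delta$. Then $|\Pr[E\text{ wins against }\covpi{G}] - \Pr[E\text{ wins against }\covpi{U}]| \ge \delta - \epsilon_{cov}$, which is still non-negligible. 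I would package $E$ together with the honest challenger as a PRNG distinguisher $\dist$: given a challenge string of length $g(s)$, $\dist$ uses it as the time-bin schedule, simulates the interaction transparently to $E$, and outputs $E$'s bit. By construction $\dist$ runs in polynomial time and distinguishes $G(s)$ from uniform with advantage $\delta - \epsilon_{cov}$, contradicting the $(\tau,\epsilon)$-security of $G$ provided the reduction fits within the budget $\tau$.

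The only delicate point, and the step I would expect to be the main obstacle, is making the reduction in the second half black-box in $G$: the simulator must generate the quantum signals, the authenticated classical messages and the vacuum/noise on idle bins purely from the challenge string, without inadvertently leaking information that would let $E$ distinguish the simulation from the real protocol for reasons unrelated to the PRNG. Once one checks that the schedule is the only place the PRNG enters and that all other randomness (basis choices, parameter estimation samples, privacy amplification seed) can be drawn fresh by the simulator exactly as in the real execution, the hybrid closes and both bullets of Definition \ref{def:covert-QKE} are satisfied, giving the theorem.
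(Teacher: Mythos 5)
The paper itself gives no proof of this theorem: it is explicitly imported from \cite{AS16} (note the ``Sourced from'' annotation), so there is no internal argument to compare yours against. That said, your reconstruction is essentially the argument given in the cited source, and it is sound: the PRNG output enters the protocol only through the transmission schedule, so conditioning on (or handing the adversary) the full schedule reduces key security to the composable information-theoretic security of the underlying QKE subroutine; and covertness follows by the hybrid through $\covpi{U}$ with a uniform schedule, whose detection bias is bounded unconditionally by the square-root law, after which any residual distinguishing advantage is packaged into a distinguisher for $G$. Your flagged ``delicate point'' --- that the reduction must be black-box in $G$ and that all non-schedule randomness can be sampled fresh by the simulator --- is indeed the only thing that needs checking, and it holds here precisely because the schedule is the sole point of contact between the PRNG and the protocol. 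Both bullets of Definition~\ref{def:covert-QKE} follow as you describe.
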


    \subsection{Deniable Covert Quantum Key Exchange (DC-QKE)}

    We are now in a position to describe DC-QKE, a simple construction shown in Protocol \ref{protocol:dc-qke}, which preserves unconditional security for the final secret key, while its deniability is as secure as the underlying PRNG used in $\covpi{\vec{r},G}$. In terms of the Security Experiment \ref{sec-exp:coercer-deniable-qke}, $\covpi{\vec{r},G}$ is run to establish a real key $k$, while non-covert QKE $\Pi_{\vec{r}'}$ is used to produce a fake key $k'$ aimed at achieving deniability, where $\vec{r}$ and $\vec{r}'$ are the respective vectors of real and fake private inputs.

    Operationally, consider a setting wherein the parties suspect in advance that they might be coerced into revealing their private coins for a given run: their joint strategy consists of running both components in Protocol \ref{protocol:dc-qke} and claiming to have employed $\Pi_{\vec{r}'}$ to establish the fake key $k'$ using the fake private randomness $\vec{r}'$ (e.g. raw key bits in BB84) and provide these as input to the adversary upon termination of a session. Thus, for Eve to be able to produce a proof showing that the revealed values are fake, she would have to break the security of covert QKE to detect the presence of $\covpi{\vec{r},G}$, as shown in Theorem \ref{thm:den-covert-reduction}. Moreover, note that covert communication can be used for dynamically agreeing on a joint strategy for denial, further highlighting its relevance for deniability.

    \begin{algorithm}
	\floatname{algorithm}{Protocol}
	\caption{DC-QKE for an $n$-bit key}
	\label{protocol:dc-qke}
	\begin{algorithmic}[1]
        \STATE \textbf{RandGen:} Let $\vec{r} = (r_1, \ldots, r_{\ell})$ be the vector of private random inputs, where $r_i \sample \{0,1\}^{|r_i|}$.
        \STATE \textbf{KeyGen:} Run $\Pi^{cov}_{\vec{r},G}$ to establish a random secret key $k \in \{0,1\}^{n}$.
	\end{algorithmic}
	Non-covert faking component $\mathcal{F}_{A,B}$:
	\begin{algorithmic}[1]
        \STATE \textbf{FakeRandGen:} Let $\vec{r}' = (r'_1, \ldots, r'_{\ell})$ be the vector of fake private random inputs, where $r'_i \sample \{0,1\}^{|r'_i|}$.
        \STATE \textbf{FakeKeyGen:} Run $\Pi_{\vec{r'}}$ to establish a separate fake key $k' \in \{0,1\}^n$.
	\end{algorithmic}
    \end{algorithm}

    \begin{remark}
	The original analysis in \cite{beaver2002deniability} describes an attack based solely on revealing fake raw key bits that may be inconsistent with the adversary's observations. An advantage of DC-QKE in this regard is that Alice's strategy for achieving coercer-deniability consists of revealing all the secret values of the non-covert QKE $\Pi_{\vec{r}'}$ honestly.
	This allows her to cover the full range of private randomness that could be considered in different variants of deniability as discussed in Remark \ref{remark:randomness-compromise}. A potential drawback is the extra cost induced by $\mathcal{F}_{A,B}$, which could, in principle, be mitigated using a less interactive solution such as QKE via UE.
	\end{remark}

	\begin{remark}
	If the classical channel is authenticated by an information-theoretically secure algorithm, the minimal entropy overhead in terms of pre-shared key (logarithmic in the input size) for $\Pi$ can be generated by $\covpi{\vec{r}}$.
    \end{remark}

	\begin{example}
	In the case of encryption, $A$ can send $c = m \oplus k$ over a covert channel to $B$, while for denying to $m'$, she can send $c' = m' \oplus k'$ over a non-covert channel. Alternatively, she can transmit a single ciphertext over a non-covert channel such that it can be opened to two different messages. To do so, given $c = m \oplus k$, Alice computes $k' = m' \oplus c = m' \oplus m \oplus k$, and she can then either encode $k'$ as a codeword, as described in Section \ref{subsec:qke-and-ue}, and run $\Pi_{\vec{r}'}$ via uncloneable encryption, thus allowing her to reveal the entire transcript to Eve honestly, or she can agree with Bob on a suitable privacy amplification (PA) function (with PA being many-to-one) as part of their denying program in order to obtain $k'$.
	\end{example}

	\begin{theorem}\label{thm:den-covert-reduction}
	If $\covpi{\vec{r},G}$ is a covert QKE protocol, then DC-QKE given in Protocol \ref{protocol:dc-qke} is a coercer-deniable QKE protocol that satisfies Definition \ref{def:coercer-deniable-qke}.
	\end{theorem}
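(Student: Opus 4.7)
The plan is to prove the contrapositive by a black-box reduction: from any efficient adversary $E$ that wins the coercer-deniability experiment for DC-QKE with non-negligible advantage, I would construct an efficient adversary $E'$ that wins the covertness experiment for $\covpi{\vec{r},G}$ with comparable advantage, contradicting Theorem~\ref{thm:covert-QKE} together with Definition~\ref{def:covert-QKE}.

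First I would unpack what $\view{Real}$ and $\view{Fake}$ really look like under the joint strategy in Protocol~\ref{protocol:dc-qke}. In both, the classical transcript $\vec{c}$ and the adversary's quantum view are generated by the honest non-covert run of $\Pi_{\vec{r}'}$, so the distinguishing question reduces to whether the revealed pair $(k^{*},\vec{r}^{*})$ comes from the covert execution of $\covpi{\vec{r},G}$ happening alongside, or from the very non-covert run visible to $E$. A short hybrid chain makes this precise: let $H_{0}$ be $\view{Real}$, in which $E$ sees $(\vec{t},k,\vec{r})$ with the covert component active; let $H_{1}$ be $\view{Fake}$, in which $E$ sees $(\vec{t},k',\vec{r}')$; and let $H^{*}$ be the experiment obtained by switching off the covert component and substituting fresh uniform values of the appropriate length for $(k,\vec{r})$. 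By Theorem~\ref{thm:covert-QKE} the covert session key is information-theoretically uniform, so $H_{0}$ and $H^{*}$ have identically distributed revealed tuples and differ only in whether the covert session is present on the channel; symmetrically $H_{1}$ collapses onto $H^{*}$ once one fixes that $(k',\vec{r}')$ is honestly produced by $\Pi_{\vec{r}'}$ regardless of $b$.

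Next I would build the reduction itself. The machine $E'$ embeds $E$ in a play against the covertness challenger $\mathcal{C}$ that has flipped a hidden bit $b^{*}$: on $b^{*}=0$ the honest parties execute $\covpi{\vec{r},G}$, on $b^{*}=1$ they do not. In parallel $E'$ arranges an honest execution of the non-covert $\Pi_{\vec{r}'}$ on the same session (using its query interface from Section~\ref{subsec:security-model}), thereby learning the public transcript and the pair $(k',\vec{r}')$. It then assembles a view $(\vec{c},\rho_{E},k^{*},\vec{r}^{*})$ in the format expected by $E$ and hands it to $E$; when $E$ outputs ``Real'', $E'$ guesses $\hat{b}=0$, otherwise $\hat{b}=1$. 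The hybrid argument above shows that $\Pr[\hat{b}=b^{*}]$ differs from $\tfrac{1}{2}+Adv^{\Pi}_{E}(\kappa)$ by at most the gap incurred when the covert tuple is swapped for independent uniform randomness, which is negligible by Theorem~\ref{thm:covert-QKE}; hence $Adv^{\Pi^{cov}}_{E'}(\kappa)$ is non-negligible whenever $Adv^{\Pi}_{E}(\kappa)$ is.

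The main obstacle I expect is the subtle asymmetry between the two views: the revealed $(k,\vec{r})$ in $H_{0}$ is independent of the observable non-covert transcript, whereas $(k',\vec{r}')$ in $H_{1}$ is jointly generated with it. Handling this cleanly requires invoking the information-theoretic session-key secrecy of $\Pi_{\vec{r}'}$, so that $k'$ is uniform conditioned on the public non-covert transcript, together with Remark~\ref{remark:randomness-compromise}, which lets us restrict $\vec{r},\vec{r}'$ to exactly those private inputs that an honest execution of the corresponding QKE already publicly determines. Once this indistinguishability of marginals is in place, the only remaining observable difference between $H_{0}$ and $H_{1}$ is the presence of $\covpi{\vec{r},G}$ on the channel, and any efficient break of coercer-deniability of DC-QKE therefore yields an efficient break of the covertness of $\covpi{\vec{r},G}$, completing the reduction. \qed
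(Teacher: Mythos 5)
Your high-level strategy --- reduce any distinguisher between $\view{Real}$ and $\view{Fake}$ to a distinguisher for the covertness experiment, and conclude by contraposition --- is exactly the paper's argument, and your adversary $E'$ wrapped around the covertness challenger is the paper's distinguisher $\dist$. The divergence, and the genuine gap, is in how you compare the two views before the covertness step kicks in. Your hybrid $H^{*}$ substitutes fresh uniform randomness for the revealed tuple, and you argue that $H_{1}$ (the fake view, which reveals the honest opening $(k',\vec{r}')$ of $\Pi_{\vec{r}'}$) collapses onto $H^{*}$ by invoking the session-key secrecy of $\Pi_{\vec{r}'}$ together with Remark~\ref{remark:randomness-compromise}. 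Session-key secrecy only makes $k'$ close to uniform given Eve's view; it says nothing about the private randomness $\vec{r}'$. For the canonical instantiation --- $\vec{r}'$ being the BB84 raw key bits --- $\vec{r}'$ is demonstrably \emph{not} independent of Eve's observations of the non-covert channel: her decoy measurements agree with roughly $\eta/2$ of those bits, which is precisely the eavesdropping attack of Section~\ref{subsec:state-injection-attack}. Replacing $\vec{r}'$ by an independent uniform string is therefore detectable with non-negligible probability, and the step $H_{1}\approx H^{*}$ fails. The appeal to Remark~\ref{remark:randomness-compromise} does not repair this, since that remark only broadens which private inputs may be demanded; it does not make any of them uncorrelated with Eve's channel tampering.

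The paper's proof never has to make this comparison, because it frames the experiment differently: the parties reveal the honest data $(k',\vec{r}')$ of $\Pi_{\vec{r}'}$ in \emph{both} branches (that is the entire point of the faking component $\mathcal{F}_{A,B}$ --- what is handed to the coercer is always consistent with her eavesdropping on the non-covert channel), and the real and fake worlds differ only in whether $\vec{t}^{cov}$ is present, i.e.\ $(\vec{t}',\vec{t}^{cov})$ versus $(\vec{t}',\varnothing)$. Distinguishing those two transcripts is, by definition, the covertness game, so the reduction closes immediately. Your own closing paragraph correctly identifies the asymmetry between a covert-session tuple $(k,\vec{r})$ that is independent of $\vec{t}'$ and a non-covert tuple $(k',\vec{r}')$ that is jointly generated with it, but the resolution you propose does not close that asymmetry --- it cannot, because the two joint distributions with $\vec{t}'$ really are distinguishable. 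To fix the proof, re-route the hybrid chain through the paper's framing: compare ``both components run, reveal $(k',\vec{r}')$'' against ``only $\Pi_{\vec{r}'}$ runs, reveal $(k',\vec{r}')$'', so that the revealed tuple is syntactically identical on both sides and the sole residual difference is the presence of the covert traffic.
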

	\begin{proof}
	The main idea consists of showing that breaking the deniability property of DC-QKE amounts to breaking the security of covert QKE, such that coercer-deniability follows from the contrapositive of this implication, i.e., if there exists no efficient algorithm for compromising the security of covert QKE, then there exists no efficient algorithm for breaking the deniability of DC-QKE. We formalize this via a reduction, sketched as follows. Let $w' = \view{Fake}(k', \vec{t}_E, \vec{r}')$ and $w = \view{Real}(k, \vec{t}_E, \vec{r})$ denote the two views. Flip a coin $b$ for an attempt at denial: if $b=0$, then  $\vec{t}_E=(\vec{t}',\varnothing)$, else ($b=1$), $\vec{t}_E=
	(\vec{t}', \vec{t}^{cov})$, where $\vec{t}^{cov}$ and $\vec{t}'$ denote the transcripts of covert and non-covert exchanges from $\covpi{\vec{r},G}$ and $\Pi_{\vec{r}'}$.
	Now if DC-QKE is constructed from $\Pi^{cov}$, then given an efficient adversary $E$ that can distinguish $w$ from $w'$ with probability $p_1$, we can use $E$ to construct an efficient distinguisher $\dist$ to break the security of covert QKE with probability $p_2$ such that $p_1 \le p_2$. Indeed, given an instance of a DC-QKE security game, we construct a distinguisher $\dist$ that uses $E$ on input $w$ and $w'$, with the goal to win the game described in the Security Experiment \ref{sec-exp:coercer-deniable-qke}. The distinguisher $\dist$ would simply run $E$ (with negligible overhead) and observe whether $E$ succeeds at distinguishing $w$ from $w'$. Since the only element that is not sampled uniformly at random is in $\vec{t}^{cov}$ containing exchanges from the covert channel, which relies on a PRNG, the only way $E$ can distinguish $w$ from $w'$ is if she can distinguish $(\vec{t}', \vec{t}^{cov})$ from $(\vec{t}', \varnothing)$. If $E$ succeeds, then $\dist$ guesses that a covert QKE session has taken place, thereby winning the Security Experiment \ref{sec-exp:covert-qke} for covert QKE. \qed
	\end{proof}

	\section{Deniability via Entanglement Distillation}\label{sec:entanglement-distillation}

	Here we consider the possibility of achieving information-theoretic deniability via entanglement distillation (ED). In its most general form, ED allows two parties to distill maximally entangled pure states (\emph{ebits}) from an arbitrary sequence of entangled states at some positive rate using local operations and classical communication (LOCC), i.e. to move from $\ket{\Phi_\theta}_{AB} \equiv cos(\theta)\ket{00}_{AB} + sin(\theta)\ket{11}_{AB}$ to $\ket{\Phi^+}_{AB} = \frac{1}{\sqrt{2}} (\ket{00}_{AB} + \ket{11}_{AB})$, where $0 < \theta < \pi/2$.

	In the noiseless model, $n$ independent identically distributed (i.i.d.) copies of the same partially entangled state $\rho$ can be converted into $\approx nH(\rho)$ Bell pairs in the limit $n \rightarrow \infty$, i.e., from $\rho_{AB}^{
	\otimes n}$ to $\ket{\Phi^+}_{AB}^{\otimes nH(\rho)}$, where $H(\rho) = -\mathrm{Tr}(\rho \mathrm{ln} \rho)$ denotes the von Neumann entropy of entanglement. If the parties start out with pure states, local operations alone will suffice for distillation \cite{bennett1996concentrating,bennett1996purification}, otherwise the same task can be achieved via forward classical communication (one-way LOCC), as shown by the Devetak-Winter theorem \cite{devetak2005distillation}, to distill ebits from many copies of some bipartite entangled state. See also the early work of Bennett et al. \cite{bennett1996mixed} on mixed state ED. Buscemi and Datta \cite{buscemi2010distilling} relax the i.i.d. assumption and provide a general formula for the optimal rate at which ebits can be distilled from a noisy and arbitrary source of entanglement via one-way and two-way LOCC.

	Intuitively, the eavesdropping attack described in \cite{beaver2002deniability} and further detailed in Section \ref{subsec:state-injection-attack}, is enabled by the presence of noise in the channel as well as the fact that Bob cannot distinguish states sent by Alice from those prepared by Eve. As a result, attempting to deny to a different bit value encoded in a given quantum state - without knowing if this is a decoy state prepared by Eve - allows the adversary to detect such an attempt with non-negligible probability.

	In terms of deniability, the intuition behind this idea is that while Alice and Bob may not be able to know which states have been prepared by Eve, they can instead remove her ``check'' decoy states from their set of shared entangled pairs by decoupling her system from theirs. Once they are in possession of maximally entangled states, they will have effectively factored out Eve's state such that the global system is given by the pure tensor product space $\ket{\Psi^+}_{AB} \otimes \ket{\phi}_E$. Thus the pure bipartite joint system between Alice and Bob cannot be correlated with any system under Eve's control, thereby foiling her cross-checking strategy. The singlet states can then be used to perform QKE via quantum teleportation \cite{bennett1993teleporting}.

	\subsection{Deniable QKE via Entanglement Distillation and Teleportation}

	We now argue why performing randomness distillation at the quantum level, thus requiring quantum computation, plays an important role w.r.t. deniability.
	The subtleties alluded to in \cite{beaver2002deniability} arise from the fact that randomness distillation is performed in the classical post-processing step. This allows Eve to leverage her tampering in that she can verify the parties' claims against her decoy states. However, this attack can be countered by removing Eve's knowledge before the classical exchanges begin. Most security proofs of QKE \cite{lo1999unconditional,shor2000simple,mayers2001unconditional} are based on a reduction to an entanglement-based variant, such that the fidelity of Alice and Bob's final state with $\ket{\Psi^+}^{\otimes m}$ is shown to be exponentially close to 1. Moreover, secret key distillation techniques involving ED and quantum teleportation \cite{bennett1996purification,devetak2005distillation} can be used to faithfully transfer qubits from $A$ to $B$ by consuming ebits. To illustrate the relevance of distillation for deniability in QKE, consider the generalized template shown in Protocol \ref{protocol:distillation-qke}, based on these well-known techniques.
	\begin{algorithm}
	\floatname{algorithm}{Protocol}
	\caption{{\small Template for deniable QKE via entanglement distillation and teleportation}}
	\label{protocol:distillation-qke}
	\begin{algorithmic}[1]
	    \STATE $A$ and $B$ share $n$ noisy entangled pairs (assume i.i.d. states for simplicity).
	    \STATE They perform entanglement distillation to convert them into a state $\rho$ such that $F(\ket{\Psi^+}^{\otimes m},\rho)$ is arbitrarily close to 1 where $m < n$.
        \STATE Perform verification to make sure they share $m$ maximally entangled states $\ket{\Psi^+}^{\otimes m}$, and abort otherwise.
        \STATE $A$ prepares $m$ qubits (e.g. BB84 states) and performs quantum teleportation to send them to $B$ at the cost of consuming $m$ ebits and exchanging $2m$ classical bits.
        \STATE $A$ and $B$ proceed with standard classical distillation techniques to agree on a key based on their measurements.
	\end{algorithmic}
    \end{algorithm}

	By performing ED, Alice and Bob make sure that the resulting state cannot be correlated with anything else due to the monogamy of entanglement (see e.g. \cite{koashi2004monogamy,streltsov2012general}), thus factoring out Eve's system.
	The parties can open their records for steps $(2)$ and $(3)$ honestly, and open to arbitrary classical inputs for steps $(3), (4)$ and $(5)$: deniability follows from decoupling Eve's system, meaning that she is faced with a reduced density matrix on a pure bipartite maximally entangled state, i.e., a maximally mixed state $\rho_E = \mathbb{I}/2$, thus obtaining key equivocation.

	In terms of the hierarchy of entanglement-based constructions mentioned in \cite{beaver2002deniability}, this approach mainly constitutes a generalization of such schemes. It should therefore be viewed more as a step towards a theoretical characterization of entanglement-based schemes for achieving information-theoretic deniability. Due to lack of space, we omit a discussion of how techniques from device-independent cryptography can deal with maliciously prepared initial states.

	Going beyond QKE, note that quantum teleportation allows the transfer of an \emph{unknown} quantum state, meaning that even the sender would be oblivious as to what state is sent. Moreover, ebits can enable uniquely quantum tasks such as \emph{traceless exchange} in the context of quantum anonymous transmission \cite{christandl2005quantum}, to achieve \emph{incoercible} protocols that allow parties to deny to any random input.

	\section{Open Questions and Directions for Future Research}\label{sec:open-questions}

	Studying the deniability of public-key authenticated QKE both in our model and in the simulation paradigm, and the existence of an equivalence relation between our indistinguishability-based definition and a simulation-based one would be a natural continuation of this work.
	Other lines of inquiry include forward deniability, deniable QKE in conjunction with forward secrecy, deniability using covert communication in stronger adversarial models, a further analysis of the relation between the impossibility of unconditional quantum bit commitment and deniability mentioned in \cite{beaver2002deniability}, and deniable QKE via uncloneable encryption.
	Finally, gaining a better understanding of entanglement distillation w.r.t. potential pitfalls in various adversarial settings and proposing concrete deniable protocols for QKE and other tasks beyond key exchange represent further research avenues.

	\subsection*{Acknowledgments}

	We thank Mark M. Wilde and Ignatius William Primaatmaja for their comments. This work was supported by a grant (Q-CoDe) from the Luxembourg FNR.

    \bibliographystyle{splncs04}
	\bibliography{references}

\end{document}